\newcommand{\R}{{\mathord{\mathbb R}}}
\newcommand{\N}{{\mathord{\mathbb N}}}
\def\inf{{\rm inf}\,}
\DeclareMathOperator{\supp}{supp}
\theoremstyle{plain}
\newtheorem{lemma}{Lemma}[section]
\newtheorem{theorem}[lemma]{Theorem}
\theoremstyle{definition}
\theoremstyle{remark}
\newtheorem{remark}[lemma]{Remark}
\newtheorem*{remark*}{Remark}
\title{Quantum Systems at The Brink. Existence and Decay Rates of Bound States at Thresholds; Atoms}
\author{Dirk Hundertmark, Michal Jex, Markus Lange}
\begin{document}

\maketitle

\begin{abstract}
It is well known that $N$-electron atoms undergoes unbinding for a critical charge of the nucleus $Z_c$, i.e. the atom has eigenstates for the case $Z> Z_c$ and it has no bound states for $Z<Z_c$. In the present paper we derive upper bound for the bound state for the case $Z=Z_c$ under the assumption $Z_c<N-K$ where $K$ is the number of electrons to be removed for atom to be stable for $Z=Z_c$ without any change in the ground state energy. We show that the eigenvector decays faster as $\exp\left(-C\sum\sqrt{|x|_{k}}\right)$ where we sum K largest values of $|x_j|$, $j\in\{1,\ldots,N\}$. Our method do not require Born-Oppenheimer approximation.
\end{abstract}

\section{Introduction}
We consider an atom with $N$ electrons. Its energy is described  by the Hamiltonian 
\begin{align*}
H_Z^{(N)} = \sum_{j=1}^N\left( -\Delta_j - \frac{Z}{|x_j|}\right) + \sum_{j\neq k} \frac{1}{|x_j -x_k|}\,
\end{align*}
where $Z$ is in principle an arbitrary positive number. We choose units in such a way that $\frac{\hbar^2}{2m}=1$ and $\frac{e^2}{4\pi\epsilon_0}=1$. The domain of our Hamiltonian is the antisymmetric subspace of $L^2(\R^{3N})$, i.e. all square integrable functions which satisfy
\begin{align*}
\psi(\ldots,x_i,\ldots,x_j,\ldots)=-\psi(\ldots,x_j,\ldots,x_i,\ldots)\,,\quad i\neq j\,.
\end{align*}
 For the simplicity we do not take into the account the spin degrees of freedom. However one can add them without any difficulty. Also our proof does not rely explicitly on fermionic statistics and is applicable also for bosons or distinguishable particles.\\ 
 
We are interested in the behaviour of the ground state for the case that its eigenvalue is at the edge of the essential spectrum. One expects the existence of a critical coupling $Z_c$ for which all the bound states below the essential spectrum disappear. This is supported by classical results by Zhislin \cite{Zhi60} which says that there are bound states for the case $Z>N-1$ and Lieb \cite{Lie84} which showed nonexistence of bound states for $N\geq 2Z+1$. The nonexistence result was improved by Nam \cite{Nam12} to $N\geq 1.22Z+3Z^{\frac 1 3}$. The existence of bound state for the critical coupling was done in \cite{BelFraLieSei14} under the additional assumption that $Z<N-K$ where $K$ is the number of removed electrons during the transition. Gridnev \cite{Gri12} showed the existence result for the case $Z_c\in(N-2,N-1)$. His result is applicable for systems without Born-Oppenheimer approximation. Our theorem in fact also provides an alternative proof of this existence result.\\

The existence and absence for the case $N=2$ was studied extensively by Hoffmann-Ostenhof, Hoffmann-Ostenhof and Simon in the 80'. In \cite{HofOstHofOstSim83} they showed that for the distinguishable particles, i.e. electrons with spin for example, the ground state for critical coupling $Z_c\sim0.91$ exists. However in \cite{Hof84} it was proved that the situation changes for fermions without spin, i.e. there is no ground state for critical coupling $Z_c=1$. To be more precise they showed that there is no antisymmetric function depending only on $|x_1|, |x_2|$ and $x_1\cdot x_2$ which can be a ground state of the system.\\

In the present paper we show that the bound state for the critical coupling behaves as
\begin{align*}
\psi(x_1,\ldots,x_N)\leq\exp\left(-C\sum_{k}\sqrt{|x_{k}|}\right)
\end{align*}
where we sum over $K$ biggest values of $|x_{k}|$. In fact we can even improve this result and show that the eigenfunction decays exponentially unless the remaining $N-K$ electrons are localized within a small ball around the nucleus. We give a more rigorous description of this claim later on. To show our result we apply the method developed in \cite{HunJexLan19-Helium}.\\

Our paper is organized as follows. The introduction is concluded by an overview of the method used to prove the main theorem. In Section 2 we state our main result. In the last section we give the proof which is split into two parts. The first part summarizes auxiliary results needed in the second part where we prove our theorem.

\subsection{Introduction of Our Method}
\label{method}
In this section we describe step by step our method without technical details. The advantage of our method is that we do not require a gap between the eigenvalue and the threshold of the essential spectrum which is necessary for other methods, e.g. Agmon method. The main ingredient of our method is to use the repulsive parts of potentials in the considered Hamiltonians to get some extra freedom and remove the necessity of a safety distance with respect to the bottom of the essential spectrum.\\
\\
\textbf{Starting point}\\
We consider a selfadjoint operator $H$ and a normalized eigenvector $\psi$ satisfying
\begin{equation*}
	H\psi=E\psi\,,
\end{equation*}
where $E$ is the corresponding eigenvalue below or at the threshold of the essential spectrum.\\

\textbf{1st step}\\
We introduce two functions $\chi_R$ as a cutoff function with a support outside a compact region and $\zeta$ as a sequence of function related to the decay which we want to show. We calculate
\begin{equation*}
	\mathrm{Re}\langle(\zeta\chi_R)^2\psi,H\psi\rangle=E\langle(\zeta\chi_R)^2\psi,\psi\rangle=E\|\zeta\chi_R\psi\|^2
\end{equation*}

\textbf{2nd step}\\
Next we use a variant of IMS formula \cite{CycFroKirSim87} to obtain
\begin{equation*}
\langle\zeta\chi_R\psi,H\zeta\chi_R\psi\rangle-\langle\psi,|\nabla\zeta\chi_R|^2\psi\rangle=E\|\zeta\chi_R\psi\|^2
\end{equation*}
At this point we split $|\nabla\zeta\chi_R|^2$ into two terms. One of them is compactly supported, we denote it by $G$, and the other one is the rest $B$.\\

\textbf{3rd step}\\
Rearranging terms and approximating we get
\begin{equation*}
	\left\langle\zeta\chi_R\psi,\left(H-E-B\right)\zeta\chi_R\psi\right\rangle\leq \|G\psi\|^2\leq K
\end{equation*}

\textbf{Final step}\\
We show that the expression $H-E-B$ is positive. We note that here the repulsive part of the potential in $H$ may come in handy. If the expression is positive we can conclude that $\zeta\psi$ has a bounded norm. Hence $\psi$ has the expected decay behavior. We remark at this point that if $\inf\sigma_{ess}H-E$ is positive we obtain analog of Agmon method \cite{Agm82}.


\section{Main Result} \label{sec:NAtom}
In the following we consider an atom with $N$ electrons. 
We make the standard assumptions that the nucleus is infinitely heavy and at the origin, i.e. 
Born-Oppenheimer approximation. 
We denote by $x_i$ the position operator for $N$ electrons, $i \in \{1,\ldots,N\}$. 

We define the Hamiltonian of this system by
\begin{equation}\label{eq:NAtommHamiltonian}
	H_Z^{(N)} = \sum_{j=1}^N -\Delta_j - \frac{Z}{|x_j|} + \sum_{j\neq k} \frac{1}{|x_j -x_k|}
\end{equation}
where $-\Delta_j$ is the kinetic energy of the $j$-th electron. 
It is well-defined and selfadjoint on $D(H_Z^{(N)}) \subseteq L^2_a(\R^{3N})$. Note that we consider electrons to be fermions. However our approach would work also for bosonic or distinguishable particles. Actually one part of the proof would be simpler for distinguishable particles.

\textbf{Goal:}
We are interested in the decay rate of normalized  eigenfunctions $\psi_Z$ of $H_Z^{(N)}$ for the 
critical case $Z = Z_c$.

It is well-known that $E_Z^{(N)}$ is non-increasing, concave function of $Z$ with the property
\begin{align*}
	E_Z^{(N)} \leq E_Z^{(N-1)} \leq E_Z^{(N-2)} \leq \ldots
\end{align*}

We denote the ground state energy of the $N$-electron Hamiltonian by 
\begin{align*}
	E_Z^{(N)} = \inf \sigma\left(H_Z^{(N)}\right)\,,
\end{align*}
where $\sigma(\cdot)$ denotes the spectrum. For subcritical values of $Z$ the existence of corresponding ground states follows from Zhislin's theorem and the HVZ theorem. For the critical value $Z_c$ existence of a ground state was shown in \cite{BelFraLieSei14}. 
\begin{theorem}
Let $Z_c$ be the maximum of the set
\begin{align*}
 \{Z>0|E_Z^{(N)}=E_Z^{(N-1)}\,\mathrm{and}\,E_{Z_n}^{(N)}<E_{Z_n}^{(N-1)}\,\mathrm{for\,some}\,Z_n\rightarrow Z\}
\end{align*}
and assume that $Z_c<N-K$ where $K$ is the largest integer such that $E_{Z_c}^{(N)}=E_{Z_c}^{(N-K)}$. Then $H^{(N)}_{Z_c}$ has a ground state eigenfunction  $0\not\equiv\psi_{Z_c}\in L^2_a(\R^{3N})$.
\end{theorem}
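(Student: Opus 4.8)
The plan is to realize $\psi_{Z_c}$ as a weak limit of genuine ground states at slightly supercritical couplings and to prevent loss of $L^2$-mass by a \emph{uniform} exponential-type decay estimate, following the scheme of Section~\ref{method}. By the definition of $Z_c$ there is a sequence $Z_n\to Z_c$ with $E_{Z_n}^{(N)}<E_{Z_n}^{(N-1)}$; since $E_{Z_n}^{(N-1)}=\inf\sigma_{ess}(H_{Z_n}^{(N)})$ by the HVZ theorem, each $H_{Z_n}^{(N)}$ has a normalized ground state $\psi_n\in L^2_a(\R^{3N})$. Dominating each $-\Delta_j$ by $H_{Z_n}^{(N)}+\const$ (Hardy/Kato), the family $(\psi_n)$ is bounded in $H^1(\R^{3N})$, so along a subsequence $\psi_n\rightharpoonup\psi$ weakly in $H^1$, with $\psi$ antisymmetric (the antisymmetric subspace being weakly closed). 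Since $E_{Z_n}^{(N)}\to E_{Z_c}^{(N)}$ (concavity, hence continuity, of the ground-state energy in $Z$) and the Coulomb potential is locally square-integrable, one may pass to the limit in the weak eigenvalue equation to get $H_{Z_c}^{(N)}\psi=E_{Z_c}^{(N)}\psi$; as $E_{Z_c}^{(N)}=\inf\sigma(H_{Z_c}^{(N)})$, any such $\psi\not\equiv 0$ is a ground state. Everything therefore reduces to the claim $\psi\neq 0$, i.e. that no mass escapes to infinity along $(\psi_n)$.

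To prove $\psi\neq 0$ I would establish a decay bound for the $\psi_n$ that is uniform in $n$. Let $\chi_R$ be supported where $\max_j|x_j|\geq R$ and let $\zeta$ be a weight diverging at infinity; inserting $\zeta\chi_R\psi_n$ into the IMS identity $\langle\zeta\chi_R\psi_n,(H_{Z_n}^{(N)}-E_{Z_n}^{(N)})\zeta\chi_R\psi_n\rangle=\langle\psi_n,|\nabla(\zeta\chi_R)|^2\psi_n\rangle$ and localizing configuration space by a Ruelle--Simon-type partition of unity according to which subset $B$ of electrons is far from (and much farther than the rest from) the nucleus, one bounds $H_{Z_n}^{(N)}-E_{Z_n}^{(N)}$ from below on each piece. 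If $|B|=k\leq K$, the outer cluster feels from the nucleus, screened by the inner $N-k$ electrons, a net potential $\geq\sum_{i\in B}(N-k-Z_c-o(1))/|x_i|$, which is strictly positive since $Z_c<N-K\leq N-k$; together with $E_{Z_n}^{(N-k)}\geq E_{Z_n}^{(N)}$ for the inner subsystem and nonnegativity of the remaining terms this gives $H_{Z_n}^{(N)}-E_{Z_n}^{(N)}\gtrsim\sum_{i\in B}1/|x_i|$ there. If $|B|=k>K$, maximality of $K$ gives $E_{Z_c}^{(N-k)}>E_{Z_c}^{(N)}$, hence $E_{Z_n}^{(N-k)}-E_{Z_n}^{(N)}\geq\gamma$ for a fixed $\gamma>0$ and all large $n$, and the same decomposition gives $H_{Z_n}^{(N)}-E_{Z_n}^{(N)}\gtrsim\gamma$ there. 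Taking $\zeta$ to grow like $\e^{c\sqrt{|x_i|}}$ in the first, slowly decaying type of region — so that $|\nabla\zeta|^2\sim c^2/|x_i|$ is absorbed by the $1/|x_i|$ lower bound once $c$ is small compared to $\sqrt{N-K-Z_c}$ — and like $\e^{c|x_i|}$ where the gap $\gamma$ is available, while absorbing the localization errors $|\nabla\chi_R|^2=O(R^{-2})$ and the $o(1)$ from $Z_n-Z_c$, one obtains $\|\zeta\chi_R\psi_n\|\leq C$ with $C$ independent of $n$.

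This uniform bound finishes the argument: on $\{\max_j|x_j|>\rho\}$ with $\rho\geq R$ one has $\chi_R\equiv 1$ and $\zeta\geq\e^{c\sqrt{\rho}}$, so $\int_{\{\max_j|x_j|>\rho\}}|\psi_n|^2\leq\e^{-2c\sqrt{\rho}}\,C^2\to 0$ as $\rho\to\infty$ uniformly in $n$; thus $(\psi_n)$ is tight, and tightness together with boundedness in $H^1$ (hence local strong $L^2$ convergence by Rellich) upgrades the weak convergence to strong convergence in $L^2(\R^{3N})$, giving $\|\psi\|=\lim\|\psi_n\|=1$. The main obstacle is the middle step: in the critical, gapless case the usual Agmon decay of $\psi_n$ degenerates as the one-electron ionization energy $E_{Z_n}^{(N-1)}-E_{Z_n}^{(N)}\to 0$, so the decay must be extracted entirely from the long-range repulsive Coulomb tail — which forces the stretched-exponential weight $\e^{c\sqrt{|x|}}$ — while keeping every constant uniform as $Z_n\to Z_c$, handling all cluster decompositions $B$ rather than a single escaping electron, and controlling the partition-of-unity and screening errors against a potential that decays only like $1/r$.
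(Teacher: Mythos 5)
Your proposal is correct in outline and is essentially the route the paper itself endorses for this statement: the theorem is quoted from the literature, and the remark immediately following it sketches precisely your argument --- ground states $\psi_n$ at nearby couplings $Z_n$ with $E_{Z_n}^{(N)}<E_{Z_n}^{(N-1)}$, a decay estimate of the type of Theorem~\ref{thm:FallOff} made uniform in $Z$ near $Z_c$ (possible because the repulsion constant $\frac{N-K}{1+\delta}-Z$ and the gap $E_Z^{(N-K-1)}-E_Z^{(N)}$ stay bounded away from zero), and the tightness criteria (spatial plus momentum, the latter from your $H^1$ bound) to upgrade weak to strong $L^2$ convergence. Your cluster lower bounds --- screened repulsive tail of order $(N-k-Z_c)/|x_i|$ when at most $K$ electrons are far out, a fixed energy gap when more than $K$ are --- are exactly the mechanism of Lemma~\ref{lem:EstOp} and of the regions $S_1,S_2,S_3$ in the paper's proof of the main theorem, so there is no genuine gap beyond the technical details both you and the paper defer to the cited references.
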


\begin{remark}
There is an alternative way to obtain the existence of such an eigenfunction employing our main result. It is based on tightness arguments as described in \cite{HunLee12} which give necessary conditions for a weakly converging sequence to be a strongly convergent one, namely
\begin{align*}
\lim_{R\rightarrow\infty}\limsup_{n\rightarrow\infty}\int_{|x|>R}|\psi_n(x)|^2\textrm{d}x=0\,,\\
 \lim_{L\rightarrow\infty}\limsup_{n\rightarrow\infty}\int_{|k|>L}|\hat\psi_n(k)|^2\textrm{d}k=0\,,
\end{align*}
The essential ingredient in the proof of existence is then that our decay estimate works uniformly for all $Z>Z_c$. For more details we refer the reader to \cite{HunJexLan19-Helium}.
\end{remark}

In order to formulate and prove our main result we need to construct a splitting of the electron space into several regions. We begin by splitting the electrons into two groups. For that we  introduce the function $|x|_k:\R^{3N}\rightarrow\R^+_0$  for $k\in \{1,\ldots,N\}$. This function gives the $k$-th smallest value out of $|x_j|$ for every $j\in \{1,\ldots,N\}$ including degeneracy, i.e.
\begin{equation}\label{eq:ordering}
|x|_1\leq|x|_2\leq\ldots\leq|x|_N\,,\quad\forall x\in\R^{3N}\,.
\end{equation}
It is obvious that there always exists a permutation $\pi\in\mathfrak S_N$ such that $|x_{\pi(k)}|=|x|_k$. This permutation might not be unique for cases when one of the inequalities in \eqref{eq:ordering} is not sharp. We denote the set of all possible permutations for each $x$ by $\mathfrak S_N^A$. 
For a given permutation $\pi\in\mathfrak S_N^A$ and given point $x\in\R^{3N}$ we split the electron coordinates into the following two groups \\

Inner coordinates: $X_{I,K}^\pi=\left\{x_{\pi(k)}\big|k\in\{1,\ldots N-K\}\right\}$,\\
Outer coordinates: $X_{O,K}^\pi=\left\{x_{\pi(k)}\big|k\in\{N+1-K,\ldots N\}\right\}$.\\

where we omit $x$ in the notation. Now we are ready to state the main result

\begin{theorem}\label{thm:FallOff}
Let $H_Z^{(N)}$ be given by Eq.~\eqref{eq:NAtommHamiltonian} and let $\psi_{Z}\in L_a^2(\R^{3N})$ be normalized function such that 
$H_Z^{(N)}\psi_{Z} =E_Z^{(N)}  \psi_{Z}$. Then 
\begin{equation*}
	e^G \psi_Z \in L_a^2(\R^{3N})\,,
\end{equation*}
where
\begin{equation}\label{eq:RegionDependingFallOff}
	G := \begin{cases} 
		\sum_{m=N+1-K}^NC_m\sqrt{|x|_m}\,,
		\quad &x\in\R^{3N}\,\,s.t.\,\,\frac\delta 2|x|_{N-K+1}> |x|_{N-K}\\
		\sum_{m=N+1-K}^NK_m|x|_m\,, &\mathrm{otherwise} 
	\end{cases}
\end{equation}
for given small enough $\delta>0$.
\end{theorem}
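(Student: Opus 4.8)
The plan is to follow the four-step scheme outlined in Section~\ref{method}, with the key technical novelty being a careful choice of the comparison function $\zeta$ that is adapted to the region decomposition in \eqref{eq:RegionDependingFallOff}. Let $\psi = \psi_Z$, $E = E_Z^{(N)}$, and write the Hamiltonian as $H_Z^{(N)} = H_{Z}^{(N-K)} \otimes \one + \one \otimes (-\Delta_{O}) + W$, where $H_{Z}^{(N-K)}$ acts on the inner coordinates $X_{I,K}^\pi$, $-\Delta_O = \sum_{m=N+1-K}^N -\Delta_{\pi(m)}$ is the kinetic energy of the outer electrons, and $W$ collects the attractive terms $-Z/|x_{\pi(m)}|$ for the outer electrons together with \emph{all} the repulsive electron-electron terms. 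The decisive structural point is that, by the definition of $K$, we have $E_{Z}^{(N)} = E_{Z}^{(N-K)}$ and $Z_c < N-K$, so the inner Hamiltonian already realizes the full energy $E$, and the outer electrons see an effective operator $-\Delta_O + W$ whose expectation we must show is essentially nonnegative on the support of $\zeta$.

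First I would fix a permutation-measurable partition of $\R^{3N}$ into the two regions appearing in \eqref{eq:RegionDependingFallOff}, the ``well-separated'' region $\mathcal{S}_\delta = \{\frac{\delta}{2}|x|_{N-K+1} > |x|_{N-K}\}$ and its complement, and on each region define $G$ as stated; one checks that $|\nabla G|^2$ is bounded (the $\sqrt{|x|}$ pieces contribute $\sim 1/|x|$, the linear pieces contribute a constant), and that $G$ is Lipschitz, so $e^{2G}$ and its gradient make sense as multipliers. Then, with $\zeta_n$ a sequence of bounded regularizations of $e^{G}$ (cut off at level $n$ to stay in $L^2$) and $\chi_R$ supported where $|x|_N > R$, Steps 1--3 give, via the IMS-type localization formula of \cite{CycFroKirSim87},
\begin{equation*}
	\big\langle \zeta_n\chi_R\psi,\ (H_Z^{(N)} - E - |\nabla G|^2)\ \zeta_n\chi_R\psi \big\rangle \le \|G_{\mathrm{loc}}\psi\|^2 \le \mathrm{const}\,,
\end{equation*}
where $G_{\mathrm{loc}}$ is the compactly supported part of the gradient-square term. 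The final step is to prove that on the support of $\chi_R$ (for $R$ large) the operator inequality $H_Z^{(N)} - E - |\nabla G|^2 \ge 0$ holds when tested against $\zeta_n\chi_R\psi$: using the splitting above and $E = E_{Z}^{(N-K)}$, it suffices that $-\Delta_O + W - |\nabla G|^2 \ge 0$ on the outer variables. In the well-separated region the outer electrons are far from the nucleus \emph{and} far from the inner cloud, so $W$ is dominated by the Coulomb tail of strength at most $(Z - (N-K)) < 0$ effective charge — here the electron-electron repulsion is used to beat the nuclear attraction, and $-\Delta_{\pi(m)} - (\text{small})/|x_{\pi(m)}| - C_m^2/(4|x_{\pi(m)}|) \ge 0$ for $C_m$ small by a one-particle Hardy/ground-state-representation estimate, giving the $\sqrt{|x|_m}$ decay. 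In the complementary region one instead uses that $|x|_{N-K+1}$ is comparable to $|x|_{N-K}$, so the outer electrons cannot escape without dragging along the whole system, and a linear weight $K_m|x|_m$ with $K_m$ below the spectral gap of the truncated problem suffices.

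The main obstacle, as in \cite{HunJexLan19-Helium}, will be making the region-dependent weight $G$ genuinely usable: $G$ is only Lipschitz and the two pieces must be glued along the hypersurface $\frac{\delta}{2}|x|_{N-K+1} = |x|_{N-K}$ without generating uncontrolled singular contributions to $|\nabla G|^2$, and the ordering functions $|x|_k$ are themselves only Lipschitz (not smooth) at coincidence sets, so the IMS formula must be applied to a mollified $\zeta_n$ and the error terms controlled uniformly in the mollification parameter. The second delicate point is the positivity estimate in the well-separated region: one needs a quantitative lower bound showing that, once $\frac{\delta}{2}|x|_{N-K+1} > |x|_{N-K}$ with $R$ large, the \emph{full} interaction $W$ restricted to the outer block really does behave like a net-repulsive Coulomb tail (the inner electrons must be shown to screen the nucleus down to effective charge $< N-K$), which is where the hypothesis $Z_c < N-K$ enters decisively and cannot be relaxed. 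Once these two points are handled, letting $R$ be fixed large and $n \to \infty$ via monotone convergence yields $e^{G}\psi \in L^2_a(\R^{3N})$.
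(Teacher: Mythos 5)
Your overall strategy coincides with the paper's: split configuration space according to whether the inner $N-K$ electrons are well separated from the outer $K$, use a bounded regularization of the exponential weight together with an IMS-type form identity and an exterior cutoff $\chi_R$, and then prove positivity of $H_Z^{(N)}-E_Z^{(N)}-|\nabla(\mathrm{weight})|^2$ region by region, with screening giving the $\sqrt{|x|_m}$ rate and an energy gap giving the linear rate. However, two of your key steps have genuine gaps. In the well-separated region the inequality you invoke, $-\Delta_{\pi(m)}-(\text{small})/|x_{\pi(m)}|-C_m^2/(4|x_{\pi(m)}|)\geq 0$, is false for every positive coefficient of the attractive term: an attractive Coulomb tail produces negative spectrum even when restricted to $\{|x|>R\}$, and Hardy's inequality only yields $|x|^{-2}$, which cannot control $|x|^{-1}$ at infinity. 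The argument closes only because the net tail seen by each outer electron is \emph{repulsive}: the paper's Lemma~\ref{lem:EstOp} gives, by a pointwise triangle-inequality estimate in that region, the lower bound with coefficient $W_2=\tfrac{N-K}{1+\delta}-Z>0$ (here $Z<N-K$ enters), and then $W_2/|x|_k$ dominates $|\nabla F|^2\leq \sum_k C_k^2/(4|x|_k)$ for $C_k$ small; no Hardy or ground-state representation is needed, nor would it help if the tail were attractive. Your prose mentions the negative effective charge, but the inequality you actually write down, and the tool cited to prove it, would not give the required positivity.

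In the complementary region your setup provides no gap at all: you split off $H_Z^{(N-K)}$ and emphasize $E_Z^{(N)}=E_Z^{(N-K)}$, so ``the spectral gap of the truncated problem'' is zero in your decomposition and cannot justify a linear weight. What actually drives the exponential decay there is that on $\supp\chi_R$ the complement of the well-separated region forces $|x|_{N-K}\gtrsim\delta|x|_{N-K+1}\gtrsim\delta R$, i.e.\ at least $K+1$ electrons are far from the nucleus, so the correct reference energy is $E_Z^{(N-K-1)}$, and the strict gap $E_Z^{(N-K-1)}-E_Z^{(N)}>0$ --- precisely where the maximality of $K$ enters --- beats the attractive terms (which are $O(1/(\delta R))$ there, cf.\ $W_1$) and the bounded $|\nabla F|^2$. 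You never identify this threshold, and without it the positivity in that region fails. A further, smaller defect: $G$ as defined in \eqref{eq:RegionDependingFallOff} is not Lipschitz --- its two branches do not match on the interface $\tfrac\delta2|x|_{N-K+1}=|x|_{N-K}$, so $G$ jumps there and mollifying $e^G$ creates interface contributions to the localization error that do not vanish with the mollification parameter. The paper circumvents this by running the whole argument with the continuous majorant $F$ of \eqref{eq:RegionDependingFallOffEst}, built from the homogeneous cutoff $\chi^\perp_{0,2\delta}$, and using $G\leq F$ on $\supp\chi_R$; some such dominating weight is needed to make your Step~2 legitimate.
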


\begin{remark}
We note that the condition $\delta|x|_{N-K+1}\geq |x|_{N-K}$ implies that such a relation holds for every combination of outer and inner coordinates, namely
\begin{equation*}
\forall\pi\in\mathfrak S_N^A\,\forall y\in X_{I,K}^\pi, z\in X_{O,K}^\pi:\delta|z|\geq |y|\,.
\end{equation*}
\end{remark}

\begin{remark}
Our result remains valid also if we consider finite mass nucleus. For such a situation one needs to make several straighforward modifications as in \cite[Appendix B]{HunJexLan19-Helium}. However we need to add an assumption that the mass of the nucleus is at least as large as the sum of all masses of electrons. This is satisfied in physically relevant situations.
\end{remark}

\section{Proof of Theorem~\ref{thm:FallOff}}
Before stating the proof we prepare several auxiliary results. We also introduce various notations used in the proof. 

\subsection{Preliminary Estimates}
For the purpose of our proof we need to separate $\R^{3N}$ into several subsets which satisfy given symmetry requirements, 
i.e. the eigenfunction multiplied by the symmetric function is still antisymmetric. In order to achieve this we choose these subsets to be symmetric. We work with the subsets of
\begin{equation*}
\Omega=\{x\in\R^{3N}\,:\,|x|_1<\ldots<|x|_N\}\subseteq\R^{3N}\,.
\end{equation*}
It is easy to see that $\Omega$ differs from $\R^{3N}$ by the measure zero set
\begin{equation*}
\partial\Omega=\{x\in\R^{3N}\,:\,\exists i\neq j\,\mathrm{s.t.}\,\, |x_i|=|x_j|\}\,.
\end{equation*}

We start by estimating $U=\sum_{j=1}^N- \frac{Z}{|x_j|} + \sum_{j\neq k} \frac{1}{|x_j -x_k|}$ in $\Omega$ for the case of $K$ outer coordinates. We first introduce notation $\tilde x_j$ s.t. $|\tilde x_j|=|x|_j$ which is unique and well defined on $\Omega$. In the following we denote by $U_M:=\sum_{j=1}^M- \frac{Z}{|x|_j} + \sum_{j\neq k}^M \frac{1}{|\tilde x_j -\tilde x_k|}$ the potential corresponding to inner $M$ particles. The estimation of $U$ is performed in an iterative way. We start with 
\begin{equation*}
\Omega= \begin{cases} 
		A_1: |x|_1>\delta|x|_N\\
		A_1^c: |x|_1<\delta|x|_N\,.
	\end{cases}
\end{equation*}
We obtain
\begin{equation*}
U|_{A_1}\geq\left( U_{N-K-1}-\left(1+\frac{K}{\delta}\right)\frac{Z}{|x|_N}\right)\Bigg|_{A_1}
\end{equation*}
where we used $\frac{1}{|x_j-x_k|}>0$ and $-\frac{1}{|x|_j}>-\frac{1}{\delta|x|_N}$ due to the fact $|x|_1<|x|_j$, $j>1$. For the other case we have
\begin{equation*}
U|_{A_1^c}\geq\left( U_{N-1}+\sum_{j=2}^{N-1}\frac{1}{|\tilde x_j-\tilde x_N|}+\left(\frac{1}{1+\delta}-Z\right)\frac{1}{|x|_N}\right)\Bigg|_{A_1^c}
\end{equation*}
where we used $\frac{1}{|\tilde x_1-\tilde x_N|}>\frac{1}{|\tilde x_1|+|\tilde x_N|}>\frac{1}{(1+\delta)|\tilde x_N|}$. Provided that $K>1$ we split $A_1^c$ as
\begin{equation*}
A_1^c= \begin{cases} 
		A_2: |x|_1<\delta|x|_N\,\,\mathrm{and}\,\,|x|_2>\delta|x|_N\\
		A_2^c: |x|_1<\delta|x|_N\,\,\mathrm{and}\,\,|x|_2<\delta|x|_N\,
	\end{cases}
\end{equation*}
which implies
\begin{align*}
U|_{A_2}&\geq\left( U_{N-K-1}-\left(1+\frac{K}{\delta}\right)\frac{Z}{|x|_N}\right)\Bigg|_{A_2}\\
U|_{A_2^c}&\geq\left( U_{N-1}+\sum_{j=3}^{N-1}\frac{1}{|\tilde x_j-\tilde x_N|}+\left(\frac{2}{1+\delta}-Z\right)\frac{1}{|x|_N}\right)\Bigg|_{A_2^c}\,.
\end{align*}
where we used $\frac{1}{|\tilde x_2-\tilde x_N|}>\frac{1}{|\tilde x_2|+|\tilde x_N|}>\frac{1}{(1+\delta)|\tilde x_N|}$. We can repeat this process $N-K$ times. The last step yields
\begin{equation*}
A_{N-K-1}^c= \begin{cases} 
		A_{N-K}: |x|_1<\delta|x|_N,\,\ldots,\,|x|_{N-K-1}<\delta|x|_N\,\,\mathrm{and}\,\,|x|_{N-K}>\delta|x|_N\\
		A_{N-K}^c: |x|_1<\delta|x|_N,\,\ldots,\,|x|_{N-K}<\delta|x|_N\,
	\end{cases}
\end{equation*}
and
\begin{align*}
U|_{A_{N-K}}&\geq\left( U_{N-K-1}-\left(1+\frac{K}{\delta}\right)\frac{Z}{|x|_N}\right)\Bigg|_{A_{N-K}}\\
U|_{A_{N-K}^c}&\geq\left( U_{N-1}+\sum_{j=N-K+1}^{N-1}\frac{1}{|\tilde x_j-\tilde x_N|}+\left(\frac{N-K}{1+\delta}-Z\right)\frac{1}{|x|_N}\right)\Bigg|_{A_{N-K}^c}\,.
\end{align*}
Now we can repeat this process with $U_{N-1}$ with number of outer coordinates equal to $K-1$ in the region $A_{N-K}^c$.
This is summarized in the following lemma.

\begin{lemma}\label{lem:EstOp}
Let $U=- \frac{Z}{|x_j|} + \sum_{j\neq k} \frac{1}{|x_j -x_k|}$. Then
\begin{equation*}
U|_\Omega\geq \begin{cases} 
U_{N-K-1}-\frac{Z}{\delta|x|_{N-K+1}}-\sum_{j=N-K+1}^{N}\frac{Z}{|x|_j}\quad&|x|_{N-K}>\delta|x|_{N-K+1}\\
U_{N-K}+\sum_{j=N-K+1}^{N}\left(\frac{N-K}{1+\delta}-Z\right)\frac{1}{|x|_j}\quad&|x|_{N-K}<\delta|x|_{N-K+1}\,.
	\end{cases}
\end{equation*}
\end{lemma}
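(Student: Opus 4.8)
The plan is to prove the stated pointwise inequality on the open set $\Omega$, where the ordering $|x|_1<\ldots<|x|_N$ is strict and the points $\tilde x_j$ with $|\tilde x_j|=|x|_j$ are uniquely determined; since $\partial\Omega$ is a Lebesgue null set, this suffices for the corresponding inequality of multiplication operators. On $\Omega$ one has $U=\sum_{j=1}^N-\frac{Z}{|x|_j}+\sum_{j\neq k}\frac1{|\tilde x_j-\tilde x_k|}$, and I would split the electron--electron repulsion into three groups according to whether both indices are ``inner'', one is inner and one ``outer'', or both are outer, counting each interacting pair once (if $\sum_{j\neq k}$ is read over ordered pairs there is simply more repulsion available than is needed). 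The inner--inner part is retained as $U_{N-K}$ or $U_{N-K-1}$; the outer--outer part is nonnegative and is dropped; the inner--outer part is spent on screening the nuclear singularities $-Z/|x|_j$ of the outer electrons. The dichotomy of the lemma is exactly ``is there a $\delta$--gap between the $(N-K)$-th and the $(N-K+1)$-th radius''.

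Consider first the gap case $|x|_{N-K}<\delta|x|_{N-K+1}$. For every $i\in\{1,\ldots,N-K\}$ and $j\in\{N-K+1,\ldots,N\}$ one has $|\tilde x_i-\tilde x_j|\leq|\tilde x_i|+|\tilde x_j|=|x|_i+|x|_j\leq|x|_{N-K}+|x|_j\leq\delta|x|_j+|x|_j=(1+\delta)|x|_j$, hence $\frac1{|\tilde x_i-\tilde x_j|}\geq\frac1{(1+\delta)|x|_j}$. Summing the $N-K$ cross terms attached to a fixed outer index $j$ gives $\sum_{i=1}^{N-K}\frac1{|\tilde x_i-\tilde x_j|}\geq\frac{N-K}{(1+\delta)|x|_j}$, so the total contribution of outer electron $j$ is at least $\big(\frac{N-K}{1+\delta}-Z\big)\frac1{|x|_j}$. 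Adding these over $j$, retaining all of $U_{N-K}$ (none of its terms has been used), and discarding the outer--outer repulsion yields the second line of the asserted bound. This is also where the remark after the lemma enters: the inequality $|y|\leq\delta|z|$ for every inner $y$ and outer $z$, in any admissible ordering, is immediate from $|y|\leq|x|_{N-K}<\delta|x|_{N-K+1}\leq\delta|z|$.

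In the no-gap case $|x|_{N-K}>\delta|x|_{N-K+1}$ I would not attempt any screening. Discard every repulsion except those internal to the first $N-K-1$ electrons -- all discarded terms are nonnegative -- so that $U|_\Omega\geq U_{N-K-1}-\frac{Z}{|x|_{N-K}}-\sum_{j=N-K+1}^N\frac{Z}{|x|_j}$, the remaining $K+1$ nuclear singularities being precisely those not contained in $U_{N-K-1}$. The case hypothesis gives $\frac1{|x|_{N-K}}<\frac1{\delta|x|_{N-K+1}}$, hence $-\frac{Z}{|x|_{N-K}}\geq-\frac{Z}{\delta|x|_{N-K+1}}$, and the first line of the asserted bound follows.

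There is no substantial obstacle; the content is in choosing the right $\delta$--dependent dichotomy, so that the lemma is usable in the proof of Theorem~\ref{thm:FallOff}, and in the elementary but slightly fussy bookkeeping of which repulsion is assigned where -- in particular the point that in the no-gap case the singularity at $|x|_{N-K}$ cannot be absorbed into an inner cluster potential and so forces one to retain only $U_{N-K-1}$ rather than $U_{N-K}$. The iterative splitting $\Omega=A_1\cup A_1^c$, $A_1^c=A_2\cup A_2^c,\ldots$ described before the statement is one systematic way to organise this bookkeeping, and it additionally produces the finer estimates on the individual regions $A_m$ that feed into the proof of the theorem; the two-line form of the lemma is the coarse consequence obtained by collapsing that decomposition.
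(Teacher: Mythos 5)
Your argument is correct, and it proves exactly the two-line statement of Lemma~\ref{lem:EstOp}: the screening estimate $|\tilde x_i-\tilde x_j|\leq |x|_i+|x|_j\leq(1+\delta)|x|_j$ for inner $i$ and outer $j$ in the gap case, the discarding of the nonnegative outer--outer repulsion, and the use of the case condition to replace $-Z/|x|_{N-K}$ by $-Z/(\delta|x|_{N-K+1})$ in the no-gap case are precisely the paper's ingredients, and your handling of the ordered-pair convention in $\sum_{j\neq k}$ is a legitimate (and harmless) remark. The difference is organizational: the paper arrives at the lemma through an iterative decomposition of $\Omega$ into regions $A_1,A_1^c,A_2,\ldots$, comparing each inner radius with the outermost one $|x|_N$ and then recursing over the outer particles ("repeat this process with $U_{N-1}$ and $K-1$ outer coordinates"), whereas you perform a single split on the lemma's own dichotomy $|x|_{N-K}\gtrless\delta|x|_{N-K+1}$ and treat all $K$ outer electrons simultaneously, using that $|x|_i\leq|x|_{N-K}<\delta|x|_{N-K+1}\leq\delta|x|_j$ for every inner--outer pair. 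Your route is shorter and yields a complete, self-contained proof of the stated inequality (the paper leaves the recursion step and the collapse to the two-line form to the reader), while the paper's iteration additionally records the intermediate region-wise bounds on the sets $A_m$; since the proof of Theorem~\ref{thm:FallOff} only invokes the lemma in its collapsed form (through the regions determined by $\chi_{0,\delta}$), nothing is lost by your more direct bookkeeping.
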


Next we define a positive smooth function $\chi_{R}$ that will play the role of the cutoff function required in the proof of the main theorem. We define
\begin{equation}\label{eq:cutofffunc}
\chi_{R}:= \begin{cases} 
		1\,,\quad x\in\{x\in\R^{3N}\,:\,\forall m\geq N-K+1 \,, |x|_m\geq R\}\\
		0\,,\quad x\notin\{x\in\R^{3N}\,:\,\forall m\geq N-K+1 \,, |x|_m\geq R/2\}\\
		\in[0,1]\,,\quad\mathrm{otherwise}
	\end{cases} \,.
\end{equation}
The second function, which we need, is
\begin{equation}\label{eq:cutofffunc2}
\chi_{0,\delta}:= \begin{cases} 
		1\,,\quad x\in\Omega\,\,\textrm{s.t.}\,\,|x|_{N-K}<\frac \delta 2|x|_{N-K+1}\\
		0\,,\quad x\in\Omega\,\,\textrm{s.t.}\,\,|x|_{N-K}>\delta|x|_{N-K+1}
	\end{cases} \,.
\end{equation}
We also require $\chi_{0,\delta}$ to be homogeneous of order $0$. We note that such a choice is possible because our condition are homogeneous. Additionally, we introduce its complement
\begin{equation*}\label{eq:cutofffunc2}
\chi_{0,\delta}^\perp=\sqrt{1-\chi_{0,\delta}^2}\,\quad\forall x\in\Omega\,.
\end{equation*}
and the notation
\begin{align*}
\chi_{R,\delta}&=\chi_R\chi_{0,\delta}\\
\chi_{R,\delta}^\perp&=\chi_R\chi_{0,\delta}^\perp\,.
\end{align*}		
We also introduce the following function which works as a upper bound for the function defined in Eq.~\eqref{eq:RegionDependingFallOff}
\begin{equation}\label{eq:RegionDependingFallOffEst}
	F := \sum_{m=N+1-K}^N C_m\sqrt{|x|_m}+K_m|x|_m\chi^\perp_{0,2\delta} 
\end{equation}
Directly from the definition we see that $G\leq F$ for all $x\in\supp\chi_R$. Last but not least we need the following estimate 
\begin{lemma}\label{lem:EstGrad}
Let $F$ be the function defined in Eq. \eqref{eq:RegionDependingFallOffEst}. Then
\begin{equation*}
|\nabla F|^2 \leq \begin{cases} 
\sum_{m=N-K+1}^Nd_m^2+\frac{e_m}{\sqrt{|x|_m}},		\quad& x\in\mathrm{supp}\chi_{0,2\delta}^\perp\\
\sum_{m=N-K+1}^Nc_m^2\frac{1}{|x|_m},	\quad&\textrm{otherwise} 
	\end{cases}
\end{equation*}
where $c_m,d_m,e_m>0$.
\end{lemma}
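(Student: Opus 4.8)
The plan is to compute $\nabla F$ directly from the explicit formula \eqref{eq:RegionDependingFallOffEst} and then estimate each piece, distinguishing the two regions according to whether $\chi_{0,2\delta}^\perp$ vanishes or not. First I would note that $F$ is a sum of terms of the form $C_m\sqrt{|x|_m}$ and $K_m|x|_m\chi^\perp_{0,2\delta}$, and that the functions $|x|_m$ are Lipschitz with $|\nabla|x|_m|\leq 1$ almost everywhere on $\Omega$ (where the ordering is strict, so $|x|_m=|\tilde x_m|$ is smooth except on the measure-zero coincidence set). Hence $\nabla\sqrt{|x|_m} = \tfrac{1}{2\sqrt{|x|_m}}\nabla|x|_m$ has squared length at most $\tfrac{1}{4|x|_m}$, and $\nabla(|x|_m\chi^\perp_{0,2\delta})$ splits by the product rule into $\chi^\perp_{0,2\delta}\nabla|x|_m$, which is bounded since $\chi^\perp_{0,2\delta}\in[0,1]$, plus $|x|_m\nabla\chi^\perp_{0,2\delta}$.

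The key point is the last term. Since $\chi_{0,2\delta}$ (and hence $\chi^\perp_{0,2\delta}$) is homogeneous of degree $0$, its gradient is homogeneous of degree $-1$; more precisely, on the annular transition region where $\chi^\perp_{0,2\delta}$ is non-constant one has $|\nabla\chi^\perp_{0,2\delta}|\leq \const/|x|_{N-K+1}$, and on that region all outer radii $|x|_m$, $m\geq N-K+1$, are comparable to $|x|_{N-K+1}$ (they differ by at most the fixed factors coming from the $\delta$-conditions). Therefore $|x|_m|\nabla\chi^\perp_{0,2\delta}|$ is bounded by a constant there. Collecting terms: on $\supp\chi_{0,2\delta}^\perp$ we get a bounded contribution $d_m^2$ from the $K_m|x|_m$ piece together with the $\chi^\perp$-derivative piece, plus the $e_m/\sqrt{|x|_m}$ contribution from the cross term between $C_m\sqrt{|x|_m}$-gradient and the bounded parts (via Cauchy--Schwarz on the sum, $2ab\leq a^2+b^2$), giving the first case. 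On the complementary region $\chi^\perp_{0,2\delta}\equiv 0$ in a neighbourhood, so only the terms $C_m\sqrt{|x|_m}$ survive, each contributing at most $\tfrac{C_m^2}{4|x|_m}$; absorbing constants into $c_m$ yields the second case. One must also check the mixed cross-terms $\langle\nabla\sqrt{|x|_m},\nabla\sqrt{|x|_{m'}}\rangle$ for $m\neq m'$, but these only improve matters or are controlled by the same $1/\sqrt{|x|_m}$ bookkeeping since $|\nabla|x|_m|\leq 1$.

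The main obstacle I anticipate is handling the cutoff gradient $\nabla\chi^\perp_{0,2\delta}$ cleanly: one needs that the transition zone of $\chi_{0,2\delta}$ sits where all outer coordinates are of the same order, so that multiplication by $|x|_m$ exactly cancels the $1/|x|_{N-K+1}$ growth of the gradient. This is where the homogeneity-of-degree-zero choice of $\chi_{0,\delta}$ (and its $2\delta$-variant) is essential, together with the observation in the remark after Theorem~\ref{thm:FallOff} that the condition on $|x|_{N-K}/|x|_{N-K+1}$ propagates to all inner/outer pairs. A secondary technical nuisance is that $F$ is only Lipschitz, not $C^1$, on the coincidence set $\partial\Omega$; but this set has measure zero and $F$ is still weakly differentiable with the pointwise bounds above holding a.e., which suffices for the application of the IMS-type formula in the proof of Theorem~\ref{thm:FallOff}.
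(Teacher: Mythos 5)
Your overall route is the same as the paper's: differentiate $F$ term by term using $|\nabla|x|_m|=1$ a.e.\ on $\Omega$, obtain $\sum_m C_m^2/(4|x|_m)$ off $\supp\chi^\perp_{0,2\delta}$, and reduce the first case to the uniform boundedness of $|x|_m\nabla\chi^\perp_{0,2\delta}$ on the transition region. The paper settles that last point purely by scaling ($\chi^\perp_{0,2\delta}$ homogeneous of degree $0$, hence $\nabla\chi^\perp_{0,2\delta}$ homogeneous of degree $-1$, hence the product homogeneous of degree $0$), whereas you argue via the explicit estimate $|\nabla\chi^\perp_{0,2\delta}|\leq \const/|x|_{N-K+1}$ on the transition shell \emph{together with} the claim that on that shell all outer radii $|x|_m$, $m\geq N-K+1$, are comparable to $|x|_{N-K+1}$.

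That comparability claim is false, and it is exactly the crux. The transition region of $\chi_{0,2\delta}$ is cut out by conditions of the form $\delta|x|_{N-K+1}\leq|x|_{N-K}\leq 2\delta|x|_{N-K+1}$, i.e.\ it only constrains the largest \emph{inner} radius against the smallest \emph{outer} one; the remark following Theorem~\ref{thm:FallOff} propagates this to inner/outer pairs, but nothing bounds $|x|_{N-K+2},\ldots,|x|_N$ from above in terms of $|x|_{N-K+1}$. Thus for $K\geq 2$ the shell contains points with $|x|_N\gg|x|_{N-K+1}$, and your estimate only yields $|x|_N|\nabla\chi^\perp_{0,2\delta}|\lesssim|x|_N/|x|_{N-K+1}$, which is unbounded there; moreover any admissible cutoff must satisfy $|\nabla\chi^\perp_{0,2\delta}|\gtrsim 1/(\delta|x|_{N-K+1})$ somewhere in the shell (it changes from $0$ to $1$ across a layer of width of order $\delta|x|_{N-K+1}$ in the $|x|_{N-K}$ direction), so the gap cannot be closed simply by choosing $\chi_{0,2\delta}$ more cleverly. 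Your argument does prove boundedness of the $m=N-K+1$ term, hence the lemma for $K=1$, and the rest of your bookkeeping (the $1/(4|x|_m)$ term, cross terms via $2ab\leq a^2+b^2$, a.e.\ differentiability across $\partial\Omega$) matches the paper. But for $m>N-K+1$ you need a different justification of the key boundedness — the paper's homogeneity argument reduces it to boundedness of $|x|\,|\nabla\chi^\perp_{0,2\delta}|$ on the unit sphere of $\R^{3N}$ rather than to a comparison of $|x|_m$ with $|x|_{N-K+1}$, and note that your own bound $|\nabla\chi^\perp_{0,2\delta}|\lesssim 1/|x|_{N-K+1}$ shows this sphere-boundedness is itself not automatic and needs care — so as written your proposal has a genuine gap at precisely the step the lemma was introduced to handle.
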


\begin{proof}
We start by calculating
\begin{equation*}
|\nabla |x|_m|^2=\left|\sum_{j=1}^N\partial_j|x|_m\right|^2=\left|\partial_k|x_k|\right|^2=1
\end{equation*}
For $x\notin\mathrm{supp}\chi_{0,2\delta}^\perp$ we can express $|\nabla F|^2$ as
\begin{equation*}
|\nabla F|^2 =\left|\nabla \sum_{m=N+1-K}^NC_m\sqrt{|x|_m}\right|^2=\left|\sum_{m=N+1-K}^NC_m\nabla\sqrt{|x|_m}\right|^2=\sum_{m=N+1-K}^N\left|C_m\right|^2\frac{1}{4|x|_m}\,.
\end{equation*}
In the other case, i.e. $x\in\mathrm{supp}\chi_{0,2\delta}^\perp$, we write
\begin{align*}
|\nabla F|^2 =&
		\left|\nabla \sum_{m=N+1-K}^NC_m\sqrt{|x|_m}+K_m|x|_m\chi^\perp_{0,2\delta}\right|^2\\
		=&\left|\sum_{m=N+1-K}^N\nabla \left(C_m\sqrt{|x|_m}+K_m|x|_m\chi^\perp_{0,2\delta}\right)\right|^2\\
		\leq&\sum_{m=N+1-K}^N\left|\frac{C_m}{\sqrt{|x|_m}}+K_m\chi^\perp_{0,2\delta}+K_m|x|_m\nabla\chi^\perp_{0,2\delta}\right|^2\,.
\end{align*}
We need to check that $|x|_m\nabla\chi^\perp_{0,2\delta}$ is uniformly bounded. We know that $\chi^\perp_{0,2\delta}$ is homogeneous of order $0$. This implies that $\nabla\chi^\perp_{0,2\delta}$ is homogeneous of order $-1$. A direct consequence of this is that $|x|_m\nabla\chi^\perp_{0,2\delta}$ is homogeneous function of the order $0$.
\end{proof}

\subsection{Proof of Main Theorem}
Now we are ready to proof the theorem. For the convenience of the reader we highlight the steps of our method described in Subsection~\ref{method}. We remark that in the course of the proof we use a form variant of IMS formula where we relax  requirement for cut-off functions to  piecewise-$C^1$ functions. The details are provided in the appendix.\\

\begin{proof}[Proof of Theorem~\ref{thm:FallOff}]
Let $H_Z^{(N)}$ be given by Eq.~\eqref{eq:NAtommHamiltonian} and let $\psi \in L_a^2(\R^{3N})$ 
be an eigenfunction such that $H_Z^{(N)}\psi_Z = E_Z^{(N)} \psi_Z$, i.e. 
$\psi$ is a ground state for $H_Z^{(N)}$. 
We define $\xi_{R,\delta}:=\chi_{R,\delta}\exp\left(\frac{F}{1+\epsilon F}\right)$ and  $\xi^\perp_{R,\delta}:=\chi^\perp_{R,\delta}\exp\left(\frac{F}{1+\epsilon F}\right)$ for $\epsilon>0$.\\
\textbf{Step 1}:  Starting from $H_Z^{(N)}\psi_Z=E_Z^{(N)}   \psi_Z $ we get
\begin{align*}
\mathrm{Re}\left\langle(\xi_{R,\delta}^2+(\xi_{R,\delta}^\perp)^2)\psi, H_Z^{(N)}\, \psi\right \rangle= E_Z^{(N)}( \|\xi_{R,\delta} \psi\|^2+\| \xi_{R,\delta}^\perp\psi\|^2)
\end{align*}			
\textbf{Step 2}: Using a generalized variant of the IMS localization formula \cite{CycFroKirSim87} we obtain			
			\begin{equation}\label{eq:ineqstep2}
			\begin{split}
			\langle \xi_{R,\delta} \psi,H_Z^{(N)}  \xi_{R,\delta}\psi\rangle - \langle  \psi, |\nabla\xi_{R,\delta}|^2\psi \rangle 
= E_Z^{(N)}\langle \xi_{R,\delta} \psi, \xi_{R,\delta}\psi\rangle\\
			\langle \xi^\perp_{R,\delta} \psi,H_Z^{(N)}  \xi^\perp_{R,\delta}\psi\rangle - \langle  \psi, |\nabla\xi^\perp_{R,\delta}|^2\psi \rangle
= E_Z^{(N)}\langle \xi^\perp_{R,\delta} \psi, \xi^\perp_{R,\delta}\psi\rangle
\end{split}
\end{equation}
We rewrite the localization error in the following way
\begin{align*}
|\nabla\xi_{R,\delta}|^2&=
\left|( \nabla\chi_{R})\chi_{0,\delta}+\chi_R(\nabla\chi_{0,\delta})+\chi_{R,\delta}\frac{\nabla F}{(1+\epsilon F)^2}\right|^2\exp\left(\frac{2F}{1+\epsilon F}\right)\\
&\leq|\chi_R|^2\left|\nabla\chi_{0,\delta}+(\chi_{0,\delta}\nabla F)\right|^2\exp\left(\frac{2F}{1+\epsilon F}\right)+K\,,
\end{align*}
and
\begin{align*}
|\nabla\xi_{R,\delta}^\perp|^2&=
\left|( \nabla\chi_{R})\chi_{0,\delta}^\perp+\chi_R(\nabla\chi_{0,\delta}^\perp)+\chi_{R,\delta}\frac{\nabla F}{(1+\epsilon F)^2}\right|^2\exp\left(\frac{2F}{1+\epsilon F}\right)\\
&\leq|\chi_R|^2\left|\nabla\chi_{0,\delta}^\perp+(\chi_{0,\delta}^\perp\nabla F)\right|^2\exp\left(\frac{2F}{1+\epsilon F}\right)+K\,,
\end{align*}
where $K$ is a suitable uniform constant independent on $\epsilon$. Note that the inequality holds because $\nabla\chi_{R}$ is compactly supported.\\

\textbf{3rd step:} We rearrange the terms and calculate inequalities in three separate regions. This is equivalent to the calculation of quadratic forms in Eq.~\eqref{eq:ineqstep2} on 3 separate disjoint regions of $\Omega\subseteq\R^{3N}$:
\begin{align*}
S_1&:=\supp{\chi_{R}}\cap\{x\in\Omega:\chi_{0,\delta}(x)=1\}\,,\\
S_2&:=\supp{\chi_{R}}\cap\{x\in\Omega:\chi_{0,\delta}(x)=0\}\,,\\
S_3&:=\supp{\chi_{R}}\cap\{x\in\Omega:\chi_{0,\delta}(x)\in(0,1)\}\,.
\end{align*}
Hence we get
\begin{align*}
&\left\langle \xi_{R,\delta}\psi, (H_Z^{(N)}-E_Z^{(N)}) \xi_{R,\delta}\psi\right\rangle+\left\langle \xi_{R,\delta}^\perp \psi, (H_Z^{(N)}-E_Z^{(N)})  \xi_{R,\delta}^\perp\psi\right\rangle =\\
&\left\langle \xi_{R,\delta} \psi,(H_Z^{(N)}-E_Z^{(N)})  \xi_{R,\delta}\psi \right\rangle_{S_1}+\left\langle \xi_{R,\delta}^\perp \psi,(H_Z^{(N)}-E_Z^{(N)}) \xi_{R,\delta}^\perp\psi\right\rangle_{S_2}+\\
&\left\langle \xi_{R,\delta} \psi,(H_Z^{(N)}-E_Z^{(N)})  \xi_{R,\delta}\psi\right\rangle_{S_3}+\left\langle \xi_{R,\delta}^\perp \psi,(H_Z^{(N)}-E_Z^{(N)}) \xi_{R,\delta}^\perp\psi\right\rangle_{S_3}\,\\
\textrm{and}\\
&\langle  \psi, (|\nabla\xi_{R,\delta}|^2+ |\nabla\xi_{R,\delta}^\perp|^2 )\psi \rangle=\\
&\left \langle  \psi, |\nabla\xi_{R,\delta}|^2\psi \right\rangle_{S_1}+ \langle  \psi, |\nabla\xi_{R,\delta}^\perp|^2\psi \rangle_{S_2}+
 \langle  \psi, (|\nabla\xi_{R,\delta}|^2+ |\nabla\xi_{R,\delta}^\perp|^2 )\psi \rangle_{S_3}\,.
\end{align*}
Now we investigate following equalities:
\begin{align*}
&\left\langle \xi_{R,\delta} \psi,(H_Z^{(N)}-E_Z^{(N)})  \xi_{R,\delta}\psi\rangle_{S_1}= \langle  \psi, |\nabla\xi_{R,\delta}|^2\psi \right\rangle_{S_1}\\
&\left\langle \xi_{R,\delta}^\perp \psi,(H_Z^{(N)}-E_Z^{(N)}) \xi_{R,\delta}^\perp\psi\right\rangle_{S_2} = \langle  \psi, |\nabla\xi_{R,\delta}^\perp|^2\psi \rangle_{S_2}\\
&\left\langle \xi_{R,\delta} \psi,(H_Z^{(N)}-E_Z^{(N)})  \xi_{R,\delta}\psi\right\rangle_{S_3}+\left\langle \xi_{R,\delta}^\perp \psi,(H_Z^{(N)}-E_Z^{(N)}) \xi_{R,\delta}^\perp\psi\right\rangle_{S_3} \\&\quad = \langle  \psi, (|\nabla\xi_{R,\delta}|^2+ |\nabla\xi_{R,\delta}^\perp|^2 )\psi \rangle_{S_3}
\end{align*}
In the following we use several abbreviations
\begin{align*}
W_1&:=-Z\left(1+\frac 1 \delta\right)<0\,,\\
W_2&:=\frac{N-K}{1+\delta}-Z\,,\\
\mathfrak e_R&:=\chi_{R}\exp\left(\frac{F}{1+\epsilon F}\right)\,.
\end{align*}
We note that
\begin{align*}
\left \langle\varphi,\left(-\sum_{j=1}^N\Delta+U_M\right)\varphi\right\rangle\geq\left \langle\varphi,E^{(M)}_Z\varphi\right\rangle
\end{align*}
for all $\varphi\in\mathcal D(H_Z^{(N)} )$.
Using Lemma~\ref{lem:EstOp} and the fact that $E_Z^{(N-K)}\geq E_Z^{(N)}$ we obtain
\begin{align*}
&\left \langle\mathfrak e_R\psi,\left(\sum_{k=N+1-K}^N\frac{W_2}{|x|_{k}}\right)\mathfrak e_R\psi\right\rangle_{S_1}- \left\langle \mathfrak e_R\psi,\left|\nabla F\right|^2\mathfrak e_R\psi\right\rangle_{S_1}\leq K\\
&\left  \langle\mathfrak e_R\psi,\left(E_Z^{(N-K-1)}-E_Z^{(N)}+\sum_{k=N+1-K}^N\frac{W_1}{|x|_{k}}\right) \mathfrak e_R\psi\right\rangle_{S_2}- \left\langle\mathfrak e_R\psi,\left|\nabla F\right|^2\mathfrak e_R\psi\right\rangle_{S_2}\leq K\\
&\left \langle \mathfrak e_R\psi,\left(\sum_{k=N+1-K}^N\frac{W_2}{|x|_{k}}\right) \mathfrak e_R\psi\right\rangle_{S_3}\\
&-\left\langle \psi, \left(\left|\nabla\chi_{0,\delta}+(\chi_{0,\delta}\nabla F)\right|^2+\left|\nabla\chi_{0,\delta}^\perp+(\chi_{0,\delta}^\perp\nabla F)\right|^2\right)\mathfrak e_R^2\psi \right\rangle_{S_3}\leq K
\end{align*}
where we used that $\chi_{0,\delta}$ and $\chi_{0,\delta}^\perp$ sum up in squares to one and the fact that for sufficiently large $R$
\begin{align*}
\left(E_Z^{(N-K-1)}-E_Z^{(N)}+\sum_{k=N+1-K}^N\frac{W_1}{|x|_{k}}\right) \geq\left(\sum_{k=N+1-K}^N\frac{W_2}{|x|_{k}}\right)
\end{align*}
 holds.
Rearranging terms in the equation above we get
\begin{equation}\label{eq:finalIneq}
\begin{split}
&\left \langle \mathfrak e_R\psi,A_1 \mathfrak e_R\psi\right\rangle_{S_1}\leq K\\
&\left \langle \mathfrak e_R\psi,A_2 \mathfrak e_R\psi\right\rangle_{S_2}\leq K\\
&\left \langle \mathfrak e_R\psi,A_3 \mathfrak e_R\psi\right\rangle_{S_3}\leq K
\end{split}
\end{equation}
where
\begin{align*}
A_1&:=\sum_{k=N+1-K}^N\frac{W_2}{|x|_{k}}-|\nabla F|^2\\
A_2&:=E_Z^{(N-K-1)}-E_Z^{(N)}+\sum_{k=N+1-K}^N\frac{W_1}{|x|_{k}}-|\nabla F|^2\\
A_3&:=\sum_{k=N+1-K}^N\frac{W_2}{|x|_{k}}-\left|\nabla\chi_{0,\delta}+(\chi_{0,\delta}\nabla F)\right|^2-\left|\nabla\chi_{0,\delta}^\perp+(\chi_{0,\delta}^\perp\nabla F)\right|^2
\end{align*}
are restricted to appropriate regions.\\

\textbf{Final step:} We show that $A_i$ are positive. 
We start with $A_1$. Using Lemma~\ref{lem:EstGrad} with $N-K>Z$ and appropriately chosen constants $c_k$ we have
\begin{align*}
A_1\geq\sum_{k=N+1-K}^N\frac{W_2}{|x|_{k}}-\frac{c_k^2}{|x|_k}>0\,.
\end{align*}
Analogously for $A_2$ using Lemma~\ref{lem:EstGrad} we acquire
\begin{align*}
A_2\geq E_Z^{(N-K-1)}-E_Z^{(N)}+\sum_{k=N+1-K}^N\frac{W_1}{|x|_{k}}-d_k^2+\frac{e_k}{\sqrt{|x|_k}}>0\,
\end{align*}
where we used the assumption that $E_Z^{(N-K-1)}-E_Z^{(N)}>0$ and the fact that we take sufficiently large $R$. For $A_3$ we write
\begin{align*}
A_3\geq\sum_{k=N+1-K}^N\frac{W_2}{|x|_{k}}-2\left|\nabla\chi_{0,\delta}\right|^2-2\left|\nabla F\right|^2-2\left|\nabla\chi_{0,\delta}^\perp\right|^2-2\left|\nabla F\right|^2\,.
\end{align*}
One can check that $\left|\nabla\chi_{0,\delta}\right|\leq\frac{c}{|x|}$. This follows from $\chi_{0,\delta}$ being homogeneous function of degree $0$ and analogously for $\chi_{0,\delta}^\perp$. Using Lemma~\ref{lem:EstGrad} and the fact that $\mathrm{supp}(\chi_{0,\delta})\cap\mathrm{supp}(\chi^\perp_{0,2\delta})=\emptyset$ we obtain
\begin{align*}
A_3\geq\sum_{k=N+1-K}^N\frac{W_2}{|x|_{k}}-\frac{4c_k^2}{|x|_k}-\mathcal O\left(\frac{1}{|x|_k^{2}}\right)>0
\end{align*}
for a suitable choice of constants $c_m$ and large enough $R$. Recalling the definition of $\mathfrak e_R$ it remains to take the limit $\epsilon\rightarrow0$ in \eqref{eq:finalIneq} in order to acquire the desired result.
\end{proof}

\begin{appendix}
\section{Form representation of IMS localization formula}
The well known IMS localization formula \cite{CycFroKirSim87} can be written as
\begin{align*}
\langle\psi,-\Delta\psi\rangle=\langle\xi\psi,-\Delta\xi\psi\rangle-\langle\psi,|\nabla\xi|^2\psi\rangle
\end{align*}
where $\psi\in H^2(\R^d)$ and $\xi\in C^\infty(\R^d)$. Very often the requirement on smoothness of $\xi$ is relaxed to $C^2(\R^d)$. We want to relax conditions even further. This is possible by restating the problem in quadratic form sense. We strongly believe that this reformulation should be known but we were not able to find it in the literature. Our proof heavily relies on weak formulation of Gauss Theorem \cite[Sec. A.6.8]{alt92}, explicitly

\begin{theorem}
Let $\Omega\subset\R^d$ be open, bounded and with Lipschitz boundary.
\begin{enumerate}
\item If $u\in W^{1,1}(\Omega)$ then for all $i=1,\ldots, n$
\begin{align*}
\int_{\Omega}\partial_i u \,\mathrm dx^d=\int_{\partial\Omega}u\nu_i  \mathrm dS^{d-1}
\end{align*}
holds where $\nu$ is other normal to $\partial\Omega$.
\item Let $1\leq p\leq\infty$.  If $u\in W^{1,p}(\Omega)$ and $v\in W^{1,q}(\Omega)$ with $\frac 1 p+\frac 1 q=1$ then for all $i=1,\ldots, n$
\begin{align*}
\int_{\Omega}(u\partial_i v+v\partial_i u) \mathrm dx^d=\int_{\partial\Omega}uv\nu_i  \mathrm dS^{d-1}
\end{align*}
holds where $\nu$ is outer normal to $\partial\Omega$.
\end{enumerate}
\end{theorem}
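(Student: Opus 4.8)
The plan is to reduce the two identities to the classical Gauss--Green formula for $C^1$ functions on a Lipschitz domain, and then to pass to the Sobolev setting by density together with continuity of the trace. Three facts about a bounded Lipschitz domain $\Omega$ are used as input: $C^\infty(\overline\Omega)$ is dense in $W^{1,p}(\Omega)$ for every $1\le p<\infty$; the trace operator $T\colon W^{1,p}(\Omega)\to L^p(\partial\Omega,\mathrm{d}S^{d-1})$ is well defined and bounded; and, in a boundary chart, $\partial\Omega$ is (after a rigid motion) the graph of a Lipschitz function, which by Rademacher's theorem is differentiable almost everywhere.

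\textbf{Part (1).} First I would observe that both sides of $\int_\Omega\partial_i u\,\mathrm{d}x=\int_{\partial\Omega}u\,\nu_i\,\mathrm{d}S^{d-1}$ are continuous in the $W^{1,1}(\Omega)$-norm: the left-hand side trivially, and the right-hand side because $\big|\int_{\partial\Omega}(u_1-u_2)\nu_i\,\mathrm{d}S^{d-1}\big|\le\|T(u_1-u_2)\|_{L^1(\partial\Omega)}\le C\,\|u_1-u_2\|_{W^{1,1}(\Omega)}$. Hence it suffices to prove the identity for $u\in C^1(\overline\Omega)$. For such $u$ I would take a finite open cover of $\overline\Omega$ by a set $U_0$ compactly contained in $\Omega$ together with boundary charts $U_1,\dots,U_M$, and a subordinate partition of unity $\varphi_0,\dots,\varphi_M$, so that $u=\sum_j\varphi_j u$. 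The interior term $\int_\Omega\partial_i(\varphi_0 u)\,\mathrm{d}x$ vanishes by Fubini, since $\varphi_0 u$ has compact support in $\Omega$. For a boundary term $w:=\varphi_j u$, after a rigid motion $\Omega\cap U_j=\{x_d<\gamma_j(x')\}$ for a Lipschitz $\gamma_j$; integrating $\partial_i w$ first along the $x_i$-direction and using that along almost every line a Lipschitz graph bounds a region on which the fundamental theorem of calculus applies, one obtains exactly $\int w\,\nu_i\,\mathrm{d}S^{d-1}$ over that graph, with outer unit normal $\nu=(-\nabla'\gamma_j,1)/\sqrt{1+|\nabla'\gamma_j|^2}$ and surface measure $\mathrm{d}S^{d-1}=\sqrt{1+|\nabla'\gamma_j|^2}\,\mathrm{d}x'$. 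Summing over $j$ and using $\sum_j\nabla\varphi_j\equiv0$ on $\Omega$ yields part (1).

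\textbf{Part (2).} By Hölder's inequality $uv\in L^1(\Omega)$ and $u\,\partial_i v+v\,\partial_i u\in L^1(\Omega)$; the Sobolev product rule (itself obtained by mollifying $u$ and $v$ and passing to the limit, using $u_\varepsilon\to u$ in $L^p$, $v_\varepsilon\to v$ in $L^q$, and the corresponding convergence of the derivatives) then gives $uv\in W^{1,1}(\Omega)$ with $\partial_i(uv)=u\,\partial_i v+v\,\partial_i u$. Applying part (1) to $uv$ gives $\int_\Omega(u\,\partial_i v+v\,\partial_i u)\,\mathrm{d}x=\int_{\partial\Omega}T(uv)\,\nu_i\,\mathrm{d}S^{d-1}$, so it only remains to identify $T(uv)=(Tu)(Tv)$ in $L^1(\partial\Omega)$; this follows by approximating $u,v$ by smooth functions and combining boundedness of $T$ on $W^{1,p}$ and on $W^{1,q}$ with Hölder on $\partial\Omega$.

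\textbf{Main obstacle.} The only genuine difficulty is that the boundary is merely Lipschitz: the three inputs listed above --- density of $C^\infty(\overline\Omega)$, boundedness of the trace, and the Gauss theorem for $C^1$ functions --- are all substantially more delicate than on smooth domains, and the natural route to all of them is precisely to localize to boundary charts and invoke Rademacher's theorem, so that each local computation collapses to the one-dimensional fundamental theorem of calculus along almost every line.
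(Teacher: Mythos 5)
You are proving a statement that the paper itself does not prove: this theorem is quoted verbatim from the literature (Alt, Sec.~A.6.8 of \cite{alt92}) as an input to the appendix's form version of the IMS formula, so there is no in-paper argument to compare against. Your proposal is the standard textbook proof (density of $C^\infty(\overline\Omega)$ in $W^{1,p}(\Omega)$, boundedness of the trace $W^{1,1}(\Omega)\to L^1(\partial\Omega)$ on a Lipschitz domain, localization to boundary charts, Rademacher), and in outline it is sound. Two steps are compressed and deserve care. First, in a chart $\Omega\cap U_j=\{x_d<\gamma_j(x')\}$ your phrase ``integrating $\partial_i w$ along the $x_i$-direction'' works directly only for $i=d$; for a tangential index $i\neq d$ the intersection of the subgraph with a line parallel to $e_i$ is a countable union of intervals and identifying the resulting endpoint contributions with $\int w\,\nu_i\,\mathrm dS^{d-1}$ needs an area-formula argument. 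The cleaner route is the bi-Lipschitz flattening $(x',x_d)\mapsto(x',x_d-\gamma_j(x'))$, after which Rademacher plus the chain rule give $\int_{\Omega}\partial_i w\,\mathrm dx=-\int \partial_i\gamma_j(x')\,w(x',\gamma_j(x'))\,\mathrm dx'=\int w\,\nu_i\,\mathrm dS^{d-1}$, consistent with the normal and surface measure you wrote down. Second, in part (2) the endpoint exponents need a separate word: for $q=\infty$ mollification does not converge in $W^{1,\infty}$-norm, so both the product rule and the identification $T(uv)=(Tu)(Tv)$ should be obtained via pointwise a.e.\ convergence with uniform bounds (or by noting that on a Lipschitz domain $W^{1,\infty}(\Omega)$ consists of Lipschitz functions, whose trace is plain restriction). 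Neither point is a fatal gap; with them filled in, your argument is a complete proof of the cited result, which incidentally covers exactly the case $p=\infty$, $q=1$ that the appendix actually uses ($u=h\xi$ bounded with bounded gradient, $v=\varphi\in H^1$ localized to a bounded set).
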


Using this theorem we can prove the following.

\begin{theorem}\label{gauss}
Let $\varphi\in H^1(\R^d)$, $\xi\in C(\R^d)$ bounded. Assume that there exist countably many open sets $\Omega_j$ with  Lipschitz boundary  such that $\R^{d}$ can be written as
\begin{align}
\label{union}
\R^{d}=\overline{\dot{\bigcup}_{j\in\N}\Omega_j}\,.
\end{align}
 and such that $\nabla\xi$ exists everywhere except on $\bigcup_{j\in\N}\partial\Omega_j$. Furthermore assume that $\nabla\xi$ is bounded. Then
\begin{align*}
\mathrm{Re}\langle\nabla(\xi^2\varphi),\nabla\varphi\rangle=\langle\nabla(\xi\varphi),\nabla(\xi\varphi)\rangle-\langle\varphi,|\nabla\xi|^2\varphi\rangle\,.
\end{align*}
\end{theorem}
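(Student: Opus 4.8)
The plan is to prove the form version of the IMS localization formula (Theorem~\ref{gauss}) by reducing it to the product rule for derivatives plus an application of the weak Gauss theorem on each $\Omega_j$, taking care that all the boundary contributions cancel when summed. First I would record the pointwise identity that holds on the interior of each $\Omega_j$, where $\xi$ is differentiable: for $\varphi \in H^1$ one has $\nabla(\xi^2\varphi) = \xi^2\nabla\varphi + 2\xi(\nabla\xi)\varphi$ and $\nabla(\xi\varphi) = \xi\nabla\varphi + (\nabla\xi)\varphi$, so that
\begin{align*}
\mathrm{Re}\,\overline{\nabla(\xi^2\varphi)}\cdot\nabla\varphi = \xi^2|\nabla\varphi|^2 + 2\xi\,\mathrm{Re}(\overline{\varphi}\,\nabla\xi\cdot\nabla\varphi),
\end{align*}
while $|\nabla(\xi\varphi)|^2 - |\nabla\xi|^2|\varphi|^2 = \xi^2|\nabla\varphi|^2 + 2\xi\,\mathrm{Re}(\overline{\varphi}\,\nabla\xi\cdot\nabla\varphi)$, so the two sides agree \emph{pointwise} away from $\bigcup_j\partial\Omega_j$. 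Since that set has Lebesgue measure zero (it is a countable union of Lipschitz hypersurfaces), the two sides even agree a.e.\ as $L^1$ functions, and integrating gives the claim \emph{directly}. Wait — this shows the theorem is essentially a pointwise product-rule identity once one knows $\xi^2\varphi, \xi\varphi \in H^1(\R^d)$ with the expected weak gradients; that membership is exactly where the Gauss theorem and the decomposition \eqref{union} are needed.

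So the substantive step is: show $\xi\varphi \in H^1(\R^d)$ (and similarly $\xi^2\varphi$), with weak gradient $\xi\nabla\varphi + (\nabla\xi)\varphi$, given only that $\xi$ is continuous and bounded, $\nabla\xi$ is bounded and exists off $\bigcup_j\partial\Omega_j$, and $\varphi\in H^1$. The candidate gradient $g := \xi\nabla\varphi + (\nabla\xi)\varphi$ lies in $L^2$ because $\xi,\nabla\xi$ are bounded and $\varphi,\nabla\varphi\in L^2$. To verify it is the weak gradient I would take an arbitrary test function $\eta\in C_c^\infty(\R^d)$ and compute $\int \xi\varphi\,\partial_i\eta\,dx$ by splitting the integral over the $\Omega_j$ that meet $\supp\eta$ (finitely many suffice up to an exhaustion/limit argument, or use dominated convergence over the countable family). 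On each $\Omega_j$, $\xi$ is Lipschitz (bounded continuous with bounded gradient on a Lipschitz domain), hence $\xi\varphi\eta \in W^{1,1}(\Omega_j)$, and part (1) of the weak Gauss theorem — or part (2) applied to $u = \xi\varphi$, $v=\eta$ — gives
\begin{align*}
\int_{\Omega_j}\big(\xi\varphi\,\partial_i\eta + \eta\,\partial_i(\xi\varphi)\big)\,dx = \int_{\partial\Omega_j}\xi\varphi\eta\,\nu_i^{(j)}\,dS,
\end{align*}
with $\partial_i(\xi\varphi) = \xi\partial_i\varphi + (\partial_i\xi)\varphi$ valid a.e.\ on $\Omega_j$. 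Summing over $j$: the left sides assemble into $\int_{\R^d}(\xi\varphi\,\partial_i\eta + \eta g_i)\,dx$, and the boundary terms cancel in pairs because each interface $\partial\Omega_j\cap\partial\Omega_k$ is traversed with opposite outer normals $\nu^{(j)} = -\nu^{(k)}$ while $\xi\varphi\eta$ is \emph{continuous} across it (continuity of $\xi$ is used precisely here). Hence $\int \xi\varphi\,\partial_i\eta = -\int \eta g_i$, i.e.\ $\partial_i(\xi\varphi) = g_i$ weakly; the same argument with $\xi^2$ in place of $\xi$ handles $\xi^2\varphi$.

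With $\xi\varphi,\xi^2\varphi \in H^1(\R^d)$ established, I would finish by writing out $\langle\nabla(\xi^2\varphi),\nabla\varphi\rangle$ and $\langle\nabla(\xi\varphi),\nabla(\xi\varphi)\rangle$ using these explicit weak gradients, taking real parts, and observing the algebraic cancellation noted above — the cross term $2\,\mathrm{Re}\langle\xi(\nabla\xi)\varphi,\nabla\varphi\rangle$ appears identically on both sides, and $|\nabla\xi|^2|\varphi|^2$ is what remains, giving exactly $\langle\nabla(\xi\varphi),\nabla(\xi\varphi)\rangle - \langle\varphi,|\nabla\xi|^2\varphi\rangle = \mathrm{Re}\langle\nabla(\xi^2\varphi),\nabla\varphi\rangle$. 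The main obstacle is the bookkeeping in the summation over the countably many $\Omega_j$: justifying that the boundary integrals cancel rather than merely rearranging a conditionally convergent sum. I would handle this by first reducing to $\eta$ with compact support so that only finitely many $\Omega_j$ are relevant (using \eqref{union} and local finiteness, or if local finiteness is not assumed, by a monotone exhaustion $\bigcup_{j\le n}\Omega_j$ together with dominated convergence, the dominating function being $|\xi\varphi||\nabla\eta| + (|\xi||\nabla\varphi| + |\nabla\xi||\varphi|)|\eta| \in L^1$), after which the pairwise-cancellation of interface terms is a finite, clean computation; the "outward normal on the boundary of the whole support" contributions vanish since $\eta$ has compact support.
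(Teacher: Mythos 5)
Your proposal is correct and follows essentially the same route as the paper: establish $\xi\varphi,\xi^2\varphi\in H^1(\R^d)$ with the product-rule weak gradient by applying the weak Gauss theorem on each $\Omega_j$ and letting the interface boundary terms cancel (opposite outer normals, continuity of $\xi$), then conclude by the algebraic expansion of the two quadratic forms. Your extra care about the countable summation and the exhaustion/dominated-convergence bookkeeping is a welcome tightening of a point the paper treats implicitly, but it is not a different method.
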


\begin{proof}
We start by showing that $\xi\varphi\in H^1$ and $\xi^2\varphi\in H^1$. It is obvious that  $\xi\varphi\in L^2$. To show $\nabla(\xi\varphi)\in L^2$ we use Theorem \ref{gauss}. For $h\in C^\infty_c(\R^d)$ such that $\mathrm{supp}\,h\subseteq B_R(0)$ we write 
\begin{align*}
-\int_{\R^d}(\partial_ih)(\xi\varphi)\textrm dx^d&=-\sum_{j\in\N}\int_{\Omega_j}(\partial_ih)(\xi\varphi)\textrm dx^d\\&=-\sum_{j\in\N}\int_{\Omega_j}\partial_i(h\xi\varphi)-h\partial_i(\xi\varphi)\textrm dx^d\\
&=-\sum_{j\in\N}\int_{\Omega_j}\partial_i(h\xi\varphi)-(h(\partial_i\xi)\varphi+h\xi\partial_i\varphi)\textrm dx^d\\
&=-\sum_{j\in\N}\int_{\Omega_j\cap B_R(0)}\partial_i(h\xi\varphi)-(h(\partial_i\xi)\varphi+h\xi\partial_i\varphi)\textrm dx^d\\
&=-\sum_{j\in\N}\int_{\partial(\Omega_j\cap B_R(0))}h\xi\varphi\nu_i  \mathrm dS^{d-1}+\sum_{j\in\N}\int_{\Omega_j\cap B_R(0)}h(\partial_i\xi)\varphi+h\xi\partial_i\varphi\textrm dx^d\,.
\end{align*}
The expression $-\sum_{j\in\N}\int_{\partial(\Omega_j\cap B_R(0))}h\xi\varphi\nu_i  \mathrm dS^{d-1}$ is $0$ due to the fact that the function $h\xi\varphi$ is continuous and the fact that outer normals $\nu_i$ from two neighboring domains have opposite signs. Therefore we conclude
\begin{align*}
-\int_{\R^d}(\partial_ih)(\xi\varphi)\textrm dx^d&=\sum_{j\in\N}\int_{\Omega_j\cap B_R(0)}h[(\partial_i\xi)\varphi+\xi\partial_i\varphi]\textrm dx^d\\&=\sum_{j\in\N}\int_{\Omega_j}h[(\partial_i\xi)\varphi+\xi\partial_i\varphi]\textrm dx^d\\
&=\int_{\R^d}h[(\partial_i\xi)\varphi+\xi\partial_i\varphi]\textrm dx^d\,.
\end{align*}
Since the left hand side is finite due to
\begin{align*}
\|(\partial_i\xi)\varphi+\xi\partial_i\varphi\|_2\leq\|(\partial_i\xi)\varphi\|_2+\|\xi\partial_i\varphi\|_2\leq
\|\partial_i\xi\|_\infty\|\varphi\|_2+\|\xi\|_\infty\|\partial_i\varphi\|_2<\infty\,.
\end{align*}
This implies $\xi\varphi\in H^1$ and $\nabla(\xi\varphi)=(\nabla\xi)\varphi+\xi\nabla\varphi$. In the same fashion one can check that $\xi^2\varphi\in H^1$. Now we are ready to prove our claim. We write
\begin{align*}
2\mathrm{Re}\langle\nabla(\xi^2\varphi),\nabla\varphi\rangle=&\langle\nabla(\xi^2\varphi),\nabla\varphi\rangle+\langle\nabla\varphi,\nabla(\xi^2\varphi)\rangle\\
=&\langle\xi\nabla(\xi\varphi)+(\nabla\xi)\xi\varphi,\nabla\varphi\rangle+\langle\nabla\varphi,\xi\nabla(\xi\varphi)+(\nabla\xi)\xi\varphi\rangle\\
=&\langle\nabla(\xi\varphi),\nabla(\xi\varphi)-(\nabla\xi)\varphi\rangle+\langle(\nabla\xi)\varphi,\nabla(\xi\varphi)-(\nabla\xi)\varphi\rangle+\\
&\langle\nabla(\xi\varphi)-(\nabla\xi)\varphi,\nabla(\xi\varphi)\rangle+\langle\nabla(\xi\varphi)-(\nabla\xi)\varphi,(\nabla\xi)\varphi\rangle\\
=&2\langle\nabla(\xi\varphi),\nabla(\xi\varphi)\rangle-2\langle(\nabla\xi)\varphi,(\nabla\xi)\varphi\rangle\,
\end{align*}
which completes the proof.
\end{proof}
\end{appendix}

\bibliography{../references}
\bibliographystyle{amsplain}
\end{document}